\documentclass[journal,draftcls,onecolumn,12pt,twoside]{IEEEtranTCOM}
  
\usepackage{moreverb}
\usepackage{epsfig}
\usepackage{amsmath,amssymb,amsthm,mathrsfs,amsfonts,dsfont}
\usepackage{adjustbox,lipsum}
\usepackage{algorithm,algorithmic}
\usepackage{amsfonts}
\usepackage{amssymb}
\usepackage{amsmath}
\usepackage{amsthm}
\usepackage{multirow}
\usepackage{setspace}
\usepackage{rotating}
\usepackage{graphicx}
\usepackage{url}
\usepackage{tabularx}
\usepackage{array}
\usepackage{anyfontsize}
\usepackage{color,soul}
\usepackage{graphicx,dblfloatfix}
\usepackage{epstopdf}
\usepackage{blindtext}
\usepackage{amsmath}
\usepackage{amsthm,amssymb,amsmath,bm}
\usepackage{subfig}
\usepackage{amsfonts}
\usepackage{epsfig}
\usepackage{amssymb}
\usepackage{amsmath}
\usepackage{cite}
\hyphenation{op-tical net-works semi-conduc-tor}
\usepackage{graphicx}
\usepackage{fancyhdr}
\usepackage[font={small}]{caption}
\usepackage{tabularx}
\usepackage{tcolorbox}
\usepackage{cite}
\usepackage{setspace}

\newtheorem{theorem}{Theorem}

\newtheorem{rem}{Remark}



\sloppy
\allowdisplaybreaks
\setlength{\abovedisplayskip}{8pt}
\setlength{\belowdisplayskip}{8pt}

\begin{document}
	
	\title{Moment Generating Function of the AoI in { a Two-Source System}    With Packet Management
		\thanks{ This research has been financially supported by the Infotech Oulu, the Academy of Finland (grant 323698), and Academy of Finland 6Genesis Flagship (grant 318927). M. Codreanu would like to acknowledge the support of the European Union's Horizon 2020 research and innovation programme under the Marie Sk\l{}odowska-Curie Grant Agreement No. 793402 (COMPRESS NETS). The work of M. Leinonen has also been financially supported in part by the Academy of Finland (grant 319485). 
		M. Moltafet would like to acknowledge the support of Finnish Foundation for Technology Promotion,  HPY Research Foundation, Riitta ja Jorma J. Takanen Foundation, and Nokia Foundation.}
	\thanks{Mohammad Moltafet and Markus Leinonen are with the Centre
		for Wireless Communications--Radio Technologies, University of Oulu,
		90014 Oulu, Finland (e-mail: mohammad.moltafet@oulu.fi; markus.leinonen@oulu.fi), and
		Marian Codreanu is with Department of Science and Technology, Link\"{o}ping University, Sweden (e-mail: marian.codreanu@liu.se)
	}
\author{
		Mohammad~Moltafet, Markus~Leinonen, and Marian~Codreanu
}
}
	\maketitle

\begin{abstract}
We consider a status update system consisting of two independent sources and one server in which packets of each source are generated according to the Poisson process and packets are served according to an exponentially distributed service time.  We derive the moment generating function (MGF) of the  age of information  (AoI) for each source in the system by using the stochastic hybrid systems (SHS) under two existing source-aware packet management policies which we term  self-preemptive and non-preemptive policies. 
In the both policies, the system (i.e., the waiting queue and the server)  can contain at most two packets, one packet of each source;  when the server is busy and a new packet arrives, the possible packet of the same source in the waiting queue is replaced by the fresh packet. The main difference between the policies is that in the self-preemptive policy, the packet under service is replaced upon the arrival of a new packet from the same source, whereas in the non-preemptive policy, this new arriving packet is blocked and cleared.
%
We use the derived MGF to find the first and second moments of the AoI and show the importance of higher moments.

	\emph{Index Terms--} Age of information (AoI), stochastic hybrid systems (SHS),  moment generating function (MGF).
\end{abstract}

\section{Introduction}\label{Introduction}
In  low-latency cyber-physical system applications information freshness is critical. In such systems, various sensors are assigned to generate status update packets of various  real-world physical processes. These  packets are transmitted through a network to a sink.  Awareness of the sensors' state needs to be as timely as possible. Recently,  the age of information (AoI) was proposed to measure the information freshens  at the sink \cite{6195689,8469047,9099557}. 
If  at a time instant $t$, the most recently received status update packet contains the time stamp (i.e., the time when status update was generated) $U(t)$, AoI is defined
as the random process $\Delta(t)=t-U(t)$. Thus, the AoI measures for each sensor the time elapsed since the last received status update packet was generated.

The seminal work \cite{8469047} introduced a powerful technique, called \textit{stochastic hybrid systems} (SHS), to calculate the average AoI.
%
In \cite{9103131}, the authors extended the SHS analysis to calculate the moment generating function (MGF) of the AoI. 
 The SHS technique has  been used to analyze the AoI in various queueing models \cite{8437591,8406966,8437907,9013935,9048914,9007478,moltafet2020sourceawareage,9174099}.
 In  our work \cite{moltafet2020sourceawareage}, we derived the average AoI under three proposed source-aware packet management policies; these were shown to result in low AoI and high fairness among the sources.


 In this letter, we 
 extend our AoI analysis in \cite{moltafet2020sourceawareage} by calculating the MGF of the AoI for each source under the two source-aware packet management policies proposed in \cite{moltafet2020sourceawareage}
   which we term  \textit{self-preemptive} and  \textit{non-preemptive} policies.  
   {The main motivation behind considering these two policies is that the self-preemptive policy provides the lowest sum average AoI and the non-preemptive policy provides the fairest situation in the system among the policies studied in  \cite{moltafet2020sourceawareage}.}
   In both policies, the system (i.e., the waiting queue and the server)  can contain at most two packets, one packet of each source; and when the server is busy and a new packet arrives, the possible packet of the same source in the waiting queue is replaced by the fresh packet. The main difference between the policies is that in the self-preemptive policy, the packet under service is replaced upon the arrival of a new packet from the same source, whereas in the non-preemptive policy, this new arriving packet is blocked and cleared.
  By using the derived MGF, we derive the first and second moments of the AoI and show the importance of higher moments.


 \section{System Model}\label{System Model}
We  consider a status update system consisting of two sources\footnote{{We consider two sources for simplicity of presentation; the same methodology as used in this letter can be applied for more than two sources. However, the complexity of the calculations increases exponentially with the number of sources.}}  and  one server {as illustrated in Fig. \ref{MGFAoIs}.}
Each source observes a random process at random time instants and the measured value of the monitored process is transmitted as a status update packet.  Each status update packet contains the measured value of the monitored process and a time stamp representing the time when the sample was generated. 
{We assume that  the sources  generate their packets according to the Poisson process with rates  $\lambda_c,~c\in \{1,2\},$ and the server serves the packets according to an exponentially distributed service time with rate $\mu$.}

	\begin{figure}[t]
		\centering
	\includegraphics[width=0.75\linewidth]{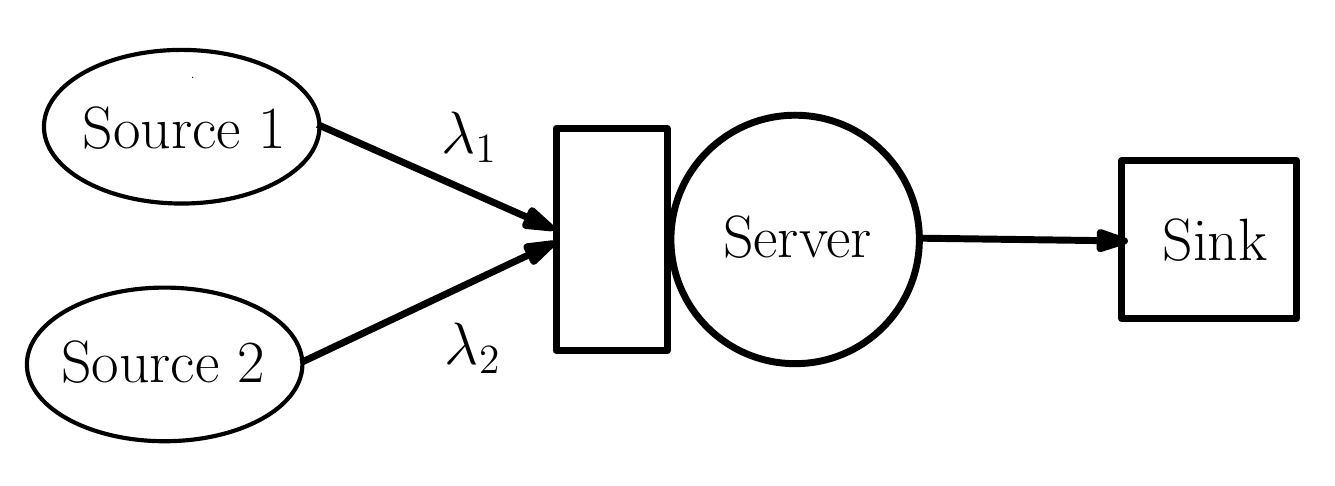}
	\caption{The considered status update system.}
		\vspace{-6mm}			
	\label{MGFAoIs}
\end{figure}

 Let  ${\rho_c={\lambda_c}/{\mu},~c\in \{1,2\},}$  be the  load of  source $ c$. Since packets of each source are generated  according to the Poisson process and the sources are independent, the  packet generation in the system follows the Poisson process with rate
 ${\lambda=\lambda_1+\lambda_2}$, 
%
%
and  the overall load in the system is
$\rho={\lambda}/{\mu}$.   Let $\Delta_c,~c\in \{1,2\},$ be the average AoI of source $c$.

\vspace{-.25cm}
\subsection{Packet Management Policies}\label{II-A}
 In both  packet management policies, the system   can contain at most two packets, one packet of each source; and when the server is busy and a new packet arrives, the possible packet of the same source in the waiting queue is replaced by the fresh packet. The main difference between the policies is that in the self-preemptive policy, the packet under service is replaced upon the arrival of a new packet from the same source, whereas in the non-preemptive policy, this new arriving packet is blocked and cleared.

\subsection{Summary of the Main Results}
In this paper, we derive the MGF of the AoI for each source under the self-preemptive and non-preemptive policies which are summarized by the following two theorems. { Note that while we derive the MGF for source 1, the derived results can be straightforwardly used to compute the MGF of AoI of source 2 by interchanging $ \rho_1 $ and $ \rho_2 $.  }
\begin{theorem}\label{MGFtheo1}
	The MGF of the  AoI of source 1 under the self-preemptive  policy is given as
	\begin{align}\label{mnb10}
	&M_{\Delta_1}(s)=\dfrac{\rho_1}{2\rho_1\rho_2+\rho+1}\times\left[\dfrac{\rho_2^2(1-\bar{s})^2+2\rho_2(1-\bar{s})^3+(1-\bar{s})^4+\sum_{k=1}^{4}\rho_1^k\gamma_k}{(\rho_1-\bar{s})(1-\bar{s})^2(1+\rho_1-\bar{s})^2(1+\rho-\bar{s})^2}\right],
	\end{align}
	where $\bar{s}={s}/{\mu}$ and 
	\begin{align}\nonumber
	&\gamma_1\!=\!\rho^2_2(4\!-\!6\bar{s}+4\bar{s}^2-\bar{s}^3)\!+\!\rho_2(8-20\bar{s}+16\bar{s}^2-6\bar{s}^3+\bar{s}^4)+(1\!-\!\bar{s})^3(4\!-\!\bar{s}),
	\\&\nonumber
	\gamma_2=\rho^2_2(5-6\bar{s}+2\bar{s}^2)+\rho_2(12-22\bar{s}+14\bar{s}^2-3\bar{s}^3)+3(1-\bar{s})^2(2-\bar{s}),
	\\&\nonumber
	\gamma_3= \rho_2^2(2-\bar{s})+\rho_2(8-10\bar{s}+3\bar{s}^2)+3\bar{s}^2-7\bar{s}+4,
	\\&\nonumber
	\gamma_4=\rho_2(2-\bar{s})+1-\bar{s}.
	\end{align}
\end{theorem}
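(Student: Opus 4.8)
The plan is to realize the system as a stochastic hybrid system (SHS) and then run the MGF version of the SHS calculus of \cite{9103131} (which extends \cite{8469047}). First I would specify the discrete (finite Markov chain) component $q(t)$, recording what the server and the single waiting slot hold as seen from source~$1$; under the self-preemptive policy five states suffice: $q=0$ (empty system), $q=1$ (only a source~$1$ packet, in service), $q=2$ (only a source~$2$ packet, in service), $q=3$ (source~$1$ in service, source~$2$ waiting), and $q=4$ (source~$2$ in service, source~$1$ waiting). The continuous state is $\mathbf{x}(t)=(x_0(t),x_1(t))$, where $x_0(t)=\Delta_1(t)$ and $x_1(t)$ is the age of the (unique, by the packet management rule) source~$1$ packet present in the system, the second coordinate being immaterial in states $q=0,2$. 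In every state both coordinates drift at unit rate. The transitions are the two Poisson arrival streams (rates $\lambda_1,\lambda_2$) and the service completions (rate $\mu$); each carries a linear reset $\mathbf{x}\mapsto\mathbf{x}\mathbf{A}_\ell$ in which $x_0$ is replaced by $x_1$ exactly at a source~$1$ service completion, and $x_1$ is set to $0$ exactly when a fresh source~$1$ packet joins the system --- notably at the \emph{self-preemption} transitions, which are self-loops at $q=1$ and $q=3$ and form the only structural difference from the non-preemptive policy.

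Second, I would read the stationary distribution $\bar\pi_q$ of $q(t)$ off its (five) balance equations; this is elementary and gives the unnormalized weights $1,\rho_1,\rho_2,\rho_1\rho_2,\rho_1\rho_2$, so that $\bar\pi_q$ equals the corresponding weight divided by $1+\rho+2\rho_1\rho_2$ --- exactly the quantity multiplying the bracket in \eqref{mnb10}. Then I would apply the SHS--MGF identity of \cite{9103131}: with $\bar{\mathbf{v}}_q(s)=\big(\mathbb{E}[e^{sx_0}\mathbf{1}\{q(t)=q\}],\ \mathbb{E}[e^{sx_1}\mathbf{1}\{q(t)=q\}]\big)$ in steady state, each state yields a linear relation whose left side is $\bar{\mathbf{v}}_q(s)$ times the scalar (sum of outgoing rates of $q$ minus $s$) and whose right side sums $\lambda^{(\ell)}\bar{\mathbf{v}}_{q_\ell}(s)\mathbf{A}_\ell$ over the transitions $\ell$ into $q$, plus a $\lambda^{(\ell)}\bar\pi_{q_\ell}$ term in every coordinate that the reset sends to $0$ (because $e^{s\cdot 0}=1$). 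With $\bar s=s/\mu$ this is a small linear system over the rational functions of $\bar s$, with coefficients affine in $\rho_1,\rho_2$, and $M_{\Delta_1}(s)=\sum_{q=0}^{4}\bar v_{q,0}(s)$.

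Third, I would solve that system, which is nearly triangular: the self-loops at $q=3,4$ make the $x_1$-components $\bar v_{4,1}(s)$, then $\bar v_{1,1}(s)$, then $\bar v_{3,1}(s)$ directly solvable, with denominators assembled from $(1+\rho_1-\bar s)$ and $(1+\rho-\bar s)$; the components $\bar v_{0,0}(s)$ and $\bar v_{2,0}(s)$ stay coupled in a $2\times2$ block, and eliminating one of them leaves a quadratic in $\bar s$ whose discriminant is the perfect square $(1+\rho_2)^2$, so the block factors as $(\rho_1-\bar s)(1+\rho-\bar s)$ --- which is where the factor $(\rho_1-\bar s)$ and one power of $(1+\rho-\bar s)$ in the denominator of \eqref{mnb10} originate. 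The remaining $x_0$-components $\bar v_{4,0}(s)$, $\bar v_{1,0}(s)$, $\bar v_{3,0}(s)$ then follow by back-substitution, each contributing only further factors $(1-\bar s)$ and $(1+\rho_1-\bar s)$. Summing the five $\bar v_{q,0}(s)$ over the common denominator $(\rho_1-\bar s)(1-\bar s)^2(1+\rho_1-\bar s)^2(1+\rho-\bar s)^2$ and grouping the numerator by powers of $\rho_1$ yields the coefficients $\gamma_1,\dots,\gamma_4$. As sanity checks I would verify $M_{\Delta_1}(0)=1$ and that $M_{\Delta_1}'(0)$ reproduces the average AoI of source~$1$ derived in \cite{moltafet2020sourceawareage}.

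The main obstacle is this last computation, not any conceptual step. It turns on a few non-obvious cancellations: in eliminating the $2\times2$ block one must see that the $(1+\rho_1-\bar s)$-poles of $\rho_2\bar v_{0,0}(s)+\bar v_{3,1}(s)$ drop by one order, so that $\bar v_{2,0}(s)$ and the downstream components keep only a double pole there; and after summing the five $\bar v_{q,0}(s)$ the apparent simple pole at $\bar s=1+\rho_2$ carried by $\bar v_{1,0}(s)$ and $\bar v_{3,0}(s)$ must vanish --- it does, since $\bar v_{3,0}(s)=-\bar v_{1,0}(s)+O(1)$ near that point. Tracking these by hand and then compressing everything into the compact form \eqref{mnb10} with the stated $\gamma_k$ is long and error-prone. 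Two further items need care: listing every transition and its reset exactly right, since the preemption self-loops are precisely what characterizes the policy treated here; and keeping the $\bar\pi_{q_\ell}$ correction term in each reset-to-zero coordinate, whose omission would silently spoil the result. Existence of the stationary MGF near $s=0$ needs no separate argument because the embedded chain is finite and irreducible.
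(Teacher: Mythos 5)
Your plan is essentially the paper's own proof: the same five-state discrete chain (your $q=0,\dots,4$ matching the paper's $00,01,02,21,12$), the same stationary weights $1,\rho_1,\rho_2,\rho_1\rho_2,\rho_1\rho_2$ normalized by $1+\rho+2\rho_1\rho_2$, and the same invocation of the MGF--SHS identity \eqref{aslieq} from \cite{9103131} followed by solving the resulting linear system and summing the $\bar v^s_{q0}$ components. The only substantive deviation is that you carry a two-dimensional continuous state $(x_0,x_1)$ rather than the paper's three-dimensional $(x_0,x_1,x_2)$; since the self-preemptive policy keeps at most one source-1 packet in the system, the paper's separate ``service'' and ``waiting'' age coordinates are indeed redundant for tracking source~1 (one of $x_1,x_2$ is always a copy of $x_0$ or of the other), so your collapsed state is a legitimate and slightly leaner bookkeeping choice that does not change the computation in any essential way. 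Your remarks about the $2\times2$ elimination yielding the factor $(\rho_1-\bar s)(1+\rho-\bar s)$ via the perfect-square discriminant $(1+\rho_2)^2$, and about the cancellation of the apparent pole at $\bar s=1+\rho_2$ that individual components $\bar v^s_{10},\bar v^s_{30}$ carry, are consistent with the explicit expressions in the paper's Appendix A, whose final sum indeed has no $(1+\rho_2-\bar s)$ factor in the denominator.
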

\begin{proof}
	The proof of Theorem \ref{MGFtheo1} appears in  Section \ref{MGF of the AoI under Policy 1}.
\end{proof}

\begin{theorem}\label{MGFtheo2}
	The MGF of the AoI of source 1 under the non-preemptive policy is given as
	\begin{align}\label{mnb102}
&M_{\Delta_1}(s)=\dfrac{\rho_1}{2\rho_1\rho_2+\rho+1}\times\left[\!\dfrac{\rho_2^3(1\!-\!\bar{s})^2\!+\!3\rho_2^2(1\!-\!\bar{s})^3\!+\!3\rho_2(1\!-\!\bar{s})^4\!+\!(1\!-\!\bar{s})^5\!+\!\sum_{k=1}^{3}\!\rho_1^k\bar{\gamma}_k}{(\rho_1-\bar{s})(1-\bar{s})^3(1+\rho_1-\bar{s})(1+\rho-\bar{s})}\!\right]
	\end{align}
	where $\bar{s}={s}/{\mu}$ and 
	\begin{align}\nonumber
	&\bar{\gamma}_1=\rho^3_2(3-3\bar{s}+\bar{s}^2)+\rho^2_2(9-19\bar{s}+11\bar{s}^2-2\bar{s}^3)+\\&\nonumber~~~~~\rho_2(9-28\bar{s}+29\bar{s}^2-11\bar{s}^3+\bar{s}^4)+3(1-\bar{s})^4,
	\\&\nonumber
	\bar{\gamma}_2=\rho^3_2(2-\bar{s})+\rho^2_2(8-10\bar{s}+3\bar{s}^2)+\rho_2(9-19\bar{s}+11\bar{s}^2-2\bar{s}^3)+(1-\bar{s})^3,
	\\&\nonumber
	\bar{\gamma}_3= \rho_2^2(2-\bar{s})+\rho_2(3-3\bar{s}+\bar{s}^2)+(1-\bar{s})^2.
	\end{align}
\end{theorem}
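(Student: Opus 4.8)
The plan is to establish Theorem~\ref{MGFtheo2} by the same stochastic hybrid systems (SHS) argument that yields Theorem~\ref{MGFtheo1} (see Section~\ref{MGF of the AoI under Policy 1} and \cite{9103131}), modifying only the discrete state space, the transition/reset maps, and the resulting linear equations so that they encode the block-and-clear rule of the non-preemptive policy instead of the replacement rule of the self-preemptive one.

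First I would set up the SHS. For the non-preemptive policy the reachable discrete modes $q(t)$ are: the empty system; a source-$c$ packet alone in the server, $c\in\{1,2\}$; and a source-$c$ packet in the server with a source-$c'$ packet, $c'\neq c$, waiting in the queue---five modes forming a continuous-time Markov chain whose jumps occur at the arrival rates $\lambda_1,\lambda_2$ and the service rate $\mu$, with the crucial feature that an arrival from the source already in service is discarded and therefore triggers no reset (hence, effectively, no transition). The continuous state $\mathbf{x}(t)$ comprises $x_0(t)=\Delta_1(t)$, the AoI of source~$1$ at the monitor, together with auxiliary coordinates recording the age that source~$1$ would attain if the source-$1$ packet currently in the server (respectively, currently waiting in the queue) were delivered immediately; between jumps every coordinate grows at unit rate, $\dot{\mathbf{x}}=\mathbf{1}$. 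I would then tabulate, for every transition $l\in\mathcal{L}$, the data $(q_l\to q_l',\ \lambda^{(l)},\ \mathbf{A}_l)$, where $\mathbf{x}'=\mathbf{x}\mathbf{A}_l$ is the binary reset map: a source-$1$ arrival that enters the system zeroes the relevant prospective-age coordinate; a source-$2$ arrival leaves the source-$1$ coordinates unchanged (it only overwrites a possible queued source-$2$ entry, irrelevant to source~$1$); and a service completion either delivers a source-$1$ packet, copying its prospective age into $x_0$, or delivers a source-$2$ packet, leaving $x_0$ unchanged while promoting any waiting source-$1$ packet into the server slot. This table is precisely where the non-preemptive analysis departs from the self-preemptive one of Theorem~\ref{MGFtheo1}.

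With the SHS fixed, I would first solve the stationary balance equations $\bar{\pi}_q\sum_{l:\,q_l=q}\lambda^{(l)}=\sum_{l:\,q_l'=q}\lambda^{(l)}\bar{\pi}_{q_l}$ together with $\sum_q\bar{\pi}_q=1$; the resulting normalizing constant is what produces the prefactor $\rho_1/(2\rho_1\rho_2+\rho+1)$ in \eqref{mnb102}. Next, for each mode $q$ I would form the linear system for the row vector $\mathbf{v}_q(s)$, whose $j$-th entry is $\lim_{t\to\infty}\mathbb{E}[e^{s x_j(t)}\mathds{1}\{q(t)=q\}]$, namely $\mathbf{v}_q(s)\,(D_q-s)=\sum_{l\in\mathcal{L}:\,q_l'=q}\lambda^{(l)}\,\mathbf{v}_{q_l}(s)\,\mathbf{A}_l$ with $D_q=\sum_{l:\,q_l=q}\lambda^{(l)}$ the total exit rate of $q$ (a self-loop that carries a nontrivial reset contributes to both sides), and solve this coupled system across the five modes. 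The MGF is then recovered as $M_{\Delta_1}(s)=\sum_{q}v_{q,0}(s)$, and collecting the result over a common denominator yields the rational expression in \eqref{mnb102}; the denominator factors and the coefficient polynomials $\bar{\gamma}_1,\bar{\gamma}_2,\bar{\gamma}_3$ fall out of this elimination.

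The main obstacle is bookkeeping and symbolic algebra rather than anything conceptual: one must get every reset matrix right for the block-and-clear rule, solve the parametrized five-mode linear system without error, and then carry out the partial-fraction/common-denominator manipulation that compresses the solution into the compact form \eqref{mnb102}. Standard consistency checks accompany the derivation---$M_{\Delta_1}(0)=1$, agreement of $M_{\Delta_1}'(0)$ with the average-AoI expression for the non-preemptive policy in \cite{moltafet2020sourceawareage}, and the single-source limit $\rho_2\to0$---and, as noted in the statement, interchanging $\rho_1$ and $\rho_2$ gives the corresponding result for source~$2$.
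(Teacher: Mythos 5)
Your proposal follows the paper's general route: set up an SHS with a five-mode Markov chain, tabulate the reset maps for the non-preemptive rule, solve the stationary balance equations, then solve a coupled linear system for the correlation vectors and sum the zeroth components to obtain $M_{\Delta_1}(s)$. Two issues, one of which is substantive.

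The substantive issue is the linear system you write for $\mathbf{v}_q(s)$. You state
$\mathbf{v}_q(s)\,(D_q-s)=\sum_{l:\,q_l'=q}\lambda^{(l)}\,\mathbf{v}_{q_l}(s)\,\mathbf{A}_l$,
but this drops a necessary term. Recall that the $j$-th entry of $\mathbf{v}_q(s)$ is $\lim_{t\to\infty}\mathbb{E}\bigl[e^{s x_j(t)}\mathds{1}\{q(t)=q\}\bigr]$. If transition $l$ resets $x_j$ to zero (equivalently, the $j$-th column of $\mathbf{A}_l$ is all zeros, as happens for every reset that inserts a fresh source-1 packet), then $e^{s x_j'}=e^{0}=1$, so the correct contribution to $\mathbf{v}_q(s)$ in coordinate $j$ is $\bar{\pi}_{q_l}$, not the zero you obtain from $\mathbf{v}_{q_l}(s)\mathbf{A}_l$. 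The paper's recursion \eqref{aslieq} therefore contains the extra summand $\bar{\pi}_{q_l}\mathbf{1}\hat{\mathbf{A}}_l$, where $\hat{\mathbf{A}}_l$ flags the zeroed columns; this term is nonzero for transitions such as $l\in\{1,5,7\}$ in Table~\ref{Multi_MGF_Table_1}, which are shared between the two policies. Omitting it would produce an incorrect $M_{\Delta_1}(s)$, and your consistency checks ($M_{\Delta_1}(0)=1$, matching the mean) would likely fail rather than save you.

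The second, minor point is an economy you miss: the discrete chain $(\mathcal{Q},\mathcal{L})$ and hence the stationary vector $\bar{\boldsymbol{\pi}}$ in \eqref{proeqq001} are \emph{identical} for the self-preemptive and non-preemptive policies, and only the two self-loop resets $l\in\{3,6\}$ (a source-1 arrival while a source-1 packet is already in service) change between Tables~\ref{Multi_MGF_Table_1} and~\ref{Multi_MGF_Table_2}. The paper exploits this by reusing the self-preemptive transition table wholesale and editing only those two rows. Your remark that a blocked arrival ``triggers no transition'' is equivalent to the paper's choice to keep transition $3$ as a self-loop with identity reset (it cancels from both sides of \eqref{aslieq}), but note that the paper does not treat $l=6$ as the identity: its reset $[x_0~x_1~x_1]$ only coincides with the identity in the coordinate that matters, $x_0$. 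You should also explicitly verify, as the paper does via \cite[Eq.~(14)]{moltafet2020sourceawareage}, that the first-moment system \eqref{asleq} admits a non-negative solution, since that is the hypothesis under which \cite[Theorem~1]{9103131} guarantees convergence of $\mathbf{v}^s_q(t)$.
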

\begin{proof}
	The proof of Theorem \ref{MGFtheo2} appears in  Section \ref{MGF of the AoI under Policy 2}.
\end{proof}
{Note that the expressions are \emph{exact}; they characterize the MGF of the AoI in \emph{closed form}.}
\vspace{-.4cm}
\section{ The SHS Technique to Calculate MGF}\label{The SHS Technique to Calculate MGF}	
In the following, we briefly present how to use the SHS technique for our MGF analysis in Section \ref{AoI Analysis Using the SHS Technique}. We refer the readers to \cite{8469047,9103131} for more details.

The SHS technique models a queueing system  through the states $(q(t), \bold{x}(t))$, where  ${q(t)\in \mathcal{Q}=\{0,1,\ldots,m\}}$ is a continuous-time finite-state Markov
chain that  
describes the occupancy of the system and ${\bold{x}(t)=[x_0(t)\cdots x_n(t)]\in \mathbb{R}^{1\times(n+1)}}$ is a continuous
process that describes the evolution of age-related processes in the system. Following the approach in  \cite{8469047,moltafet2020sourceawareage}, we label the source of interest as source 1 and  employ the continuous process $ \bold{x}(t) $ to track the age of source 1 updates at the sink.

The Markov chain $q(t)$ can be presented as a graph $(\mathcal{Q},\mathcal{L})$ where 
each discrete state $q(t)\in \mathcal{Q}$ is a node of the chain and a (directed) link $ l\in\mathcal{L} $  from node $ q_l $ to node $q'_{l}$ indicates a transition from state $ {q_l \in \mathcal{Q}}$ to state ${q'_{l}\in \mathcal{Q}}$.

A transition occurs  when a packet arrives or departs in the system. Since the time elapsed between departures and arrivals is exponentially distributed, transition $l\in\mathcal{L}$ from state $ q_l $ to state $q'_{l}$ occurs with the  exponential rate $\lambda^{(l)}\delta_{q_l,q(t)}$,
where the Kronecker delta function $\delta_{q_l,q(t)}$ ensures that the transition $ l $ occurs only when the discrete
state $ q(t) $ is equal to $ q_l $.  When a transition $l$ occurs, the  discrete state $ q_l $ changes   to state $q'_{l}$, and the continuous state $\bold{x}$ is reset to $\bold{x}'$ according to a binary reset map matrix ${\bold{A}_l}\in\mathbb{B}^{(n+1)\times(n+1)}$ as ${\bold{x}'\!=\!\bold{x}\bold{A}_l}$. In addition, as long as  state $q(t)$ is unchanged we have  ${\dot{\bold{x}}(t)\!\triangleq\!\dfrac{\partial\bold{x}(t)}{\partial t}\!=\!\bold{1}}$, where $\bold{1}$ is the row vector $[1\cdots1]\!\in\! \mathbb{R}^{1\times(n+1)}$.

Note that unlike in a typical continuous-time Markov
chain, a transition
from a  state to itself  (i.e., a self-transition) is possible in  $q(t)\in \mathcal{Q}$. In the case of a self-transition, a reset of the continuous state $\bold{x}$ takes place, but the discrete state remains the same. In addition, for a given pair of states $\bar q,\hat q\in \mathcal{Q}$, there  may be multiple
transitions $ l $ and $ l' $ so that the discrete state changes from $ \bar q $ to
$ \hat q $ but the transition reset maps $\bold{A}_l $ and $ \bold{A}_{l'}$ are different (for more details, see  \cite[Section III]{8469047}). 

To calculate the MGF of the  AoI using the SHS technique, the state probabilities of the Markov chain,  the correlation vector between the discrete state $q(t)$ and the continuous state $\bold{x}(t)$, and  the correlation vector between the discrete state $q(t)$ and the exponential function $e^{s\bold{x}(t)},~s\in\mathbb{R}$, need to be defined. Let $\pi_q(t)$ denote the probability of being in state $q$ of the Markov chain. Let $\bold{v}_q(t)=[{v}_{q0}(t)\cdots{v}_{qn}(t)]\in\mathbb{R}^{1\times(n+1)}$  denote the correlation vector between the discrete state $q(t)$ and the continuous state $\bold{x}(t)$. Let $\bold{v}^s_q(t)=[{v}^s_{q0}(t)\cdots{v}^s_{qn}(t)]\in\mathbb{R}^{1\times(n+1)}$    denote the correlation vector between the state $q(t)$ and the exponential function $e^{s\bold{x}(t)}$. Accordingly, we have 
\begin{equation}
\pi_q(t)=\mathrm{Pr}(q(t)=q)=\mathbb{E}[\delta_{q,q(t)}], \,\,\,\forall q\in\mathcal{Q},
\end{equation}
\begin{equation}
\bold{v}_q(t)=[{v}_{q0}(t)\cdots{v}_{qn}(t)]=\mathbb{E}[\bold{x}(t)\delta_{q,q(t)}],\,\,\,\forall q\in\mathcal{Q},
\end{equation}
\begin{equation}
\bold{v}^s_q(t)=[{v}^s_{q0}(t)\cdots{v}^s_{qn}(t)]=\mathbb{E}[e^{s\bold{x}(t)}\delta_{q,q(t)}],\,\,\,\forall q\in\mathcal{Q}.
\end{equation}

Let $\mathcal{L}'_q$ denote the set of incoming transitions and  $\mathcal{L}_q$ denote the set of outgoing transitions for state $q$, defined as 
\begin{align}\nonumber
&\mathcal{L}'_q=\{l\in\mathcal{L}:q'_{l}=q\},~~~~\mathcal{L}_q=\{l\in\mathcal{L}:q_{l}=q\},\,\,\,\forall q\in\mathcal{Q}.
\end{align}
Following the ergodicity assumption of the Markov chain $q(t)$ in the AoI analysis \cite{8469047,9103131}, the state
probability vector $\boldsymbol{\pi}(t)=[\pi_0(t) \cdots \pi_m(t)]$ converges uniquely 
to the  stationary vector $\bar{\boldsymbol{\pi}}=[\bar{\pi}_0 \cdots \bar{\pi}_m]$ satisfying
\begin{align}\nonumber
\bar{{\pi}}_q\textstyle\sum_{l\in\mathcal{L}_q}\lambda^{(l)}=\textstyle\sum_{l\in\mathcal{L}'_q}\lambda^{(l)}\bar{{\pi}}_{q_l}, \,\forall q\in\mathcal{Q},~~
\textstyle\sum_{q\in\mathcal{Q}}\bar{{\pi}}_q=1.
\end{align}
Further,  it has been shown in \cite[Theorem 1]{9103131} that under the ergodicity assumption of the Markov chain $q(t)$, if we can find a non-negative limit ${\bar{\bold{v}}_q=[\bar{v}_{q0} \cdots \bar{v}_{qn} ], \forall q\in \mathcal{Q}}$, for the correlation vector ${\bold{v}_q(t)}$ satisfying
\begin{align}\label{asleq}	
\bar{\bold{v}}_q\textstyle\sum_{l\in\mathcal{L}_q}\lambda^{(l)}=\bar{{\pi}}_q\bold{1}+\textstyle\sum_{l\in\mathcal{L}'_q}\lambda^{(l)}\bar{\bold{v}}_{q_l}\bold{A}_l, \,\,\,\forall q\in\mathcal{Q},
\end{align}
there exists $s_0>0$ such that for all $s<s_0$, $\bold{v}^s_q(t), \forall q\in \mathcal{Q},$ converges to  $\bold{\bar v}^s_q$ that satisfies
\begin{align}\label{aslieq}	
\bold{\bar v}^s_q\textstyle\sum_{l\in\mathcal{L}_q}\lambda^{(l)}\!=\!s\bold{\bar v}^s_q\!+\!\textstyle\sum_{l\in\mathcal{L}'_q}\lambda^{(l)}[\bold{\bar v}^s_{q_l}\bold{A}_l\!+\!\bar{{\pi}}_{q_l}\bold{1}\bold{\hat A}_l], \,\,\,\forall q\in\mathcal{Q},
\end{align}
where $\bold{\hat A}_l\in\mathbb{B}^{(n+1)\times(n+1)}$ is a binary  matrix whose ${k,j}$th element, $\bold{\hat A}_l(k,j)$, is given as 
\begin{align}\nonumber
\bold{\hat A}_l(k,j)\!=\!\begin{cases}
1,\!&\!k\!=\!j, \text{and $j$th column of $ \bold{A}_l $ is a zero vector, } \\
0,\!&\!\text{otherwise.}
\end{cases}
\end{align}
Finally, the MGF of the state $\bold{x}(t)$, which is calculated by $\mathbb{E}[e^{s\bold{x}(t)}]$, converges to the stationary vector  {\cite[Theorem 1]{9103131}}
\begin{align}\label{AOIANAL}	
\mathbb{E}[e^{s\bold{x}}]=\textstyle\sum_{q\in\mathcal{Q}}\bold{\bar v}^s_q.
\end{align}
As \eqref{AOIANAL} implies, if the first element of continuous state $\bold{x}(t)$ represents the AoI of source 1 at the sink, the MGF of the AoI of source 1 at the sink converges to 
\begin{align}\label{MGFage}
M_{\Delta_1}(s)=\textstyle\sum_{q\in\mathcal{Q}}{\bar v}^s_{q0}.
\end{align}

From \eqref{MGFage},  the main challenge in calculating the MGF of the AoI of source 1 using the SHS technique reduces to deriving the first elements of correlation vectors $\bold{\bar v}^s_q,$ 
${\forall{q}\in\mathcal{Q}}$. 
\vspace{-.5cm}
\section{ AoI Analysis Using the SHS Technique}\label{AoI Analysis Using the SHS Technique}
 In this section, we use  the SHS technique to calculate the MGF of the AoI of source 1 in \eqref{MGFage} for each source under the self-preemptive and non-preemptive policies.

{The discrete state space is ${\mathcal{Q}=\{00,01,02,21,12\}},$ which is determined according to the occupancy of the system.} State $a_1a_2$ indicates that  a packet of source $ a_2$  is under service when $ a_2\ne0$ and a packet of source $ a_1 $ is in the waiting queue when $ a_1\ne0$. Note that ${a_1=0}$ (resp. $ {a_2=0}$) indicates that the waiting queue (resp. the server) is empty. {For example, state $a_1a_2=12$ indicates that a source 1 packet is in the waiting queue and a source 2 packet is under service.}
%
%
%

The continuous process is ${\bold{x}(t)=[x_0(t)~x_1(t)~x_2(t)]}$, where $x_0(t)$ is the current AoI of source 1 at time instant $t$, $\Delta_1(t)$;  $ x_1(t)  $ encodes  what the AoI of source 1 would become if the packet that is under service is delivered to the sink at time instant $t$; $ x_2(t)  $ encodes  what the AoI of source 1 would become if the  packet in the waiting queue is delivered to the sink at time instant $t$.

Recall that  to calculate the MGF of the  AoI of source 1 in \eqref{MGFage} we need to find ${\bar v}^s_{q0}, \forall q\in\mathcal{Q}$, which are the solution of  the system of linear equations \eqref{aslieq}	 with variables $\bold{\bar v}^s_q, \forall{q}\in\mathcal{Q}$. To form the system of linear equations \eqref{aslieq}, we need to determine  $\bar{{\pi}}_q$, $\bold{A}_l$, and   $\bold{\hat A}_l$   for each state $\forall{q}\in\mathcal{Q}$, and transition ${l\in\mathcal{L}'_q}$.
 Next, we derive these under the self-preemptive  and non-preemptive policies in Sections \ref{MGF of the AoI under Policy 1} and \ref{MGF of the AoI under Policy 2}, respectively.

\subsection{MGF of the AoI Under the Self-Preemptive Policy}\label{MGF of the AoI under Policy 1}
The Markov chain for the discrete state $q(t)$  is shown in Fig.~\ref{Chain_MGF}. The transitions between the discrete states ${{q_l \rightarrow q'_l}, \,\,\forall l\in \mathcal{L}}$, and their effects on the continuous state $\bold{x}(t)$ 
are summarized in Table \ref{Multi_MGF_Table_1}. The explanations of the transitions can be found in \cite[Section~IV.B]{moltafet2020sourceawareage}.

\begin{figure}
	\centering
	\includegraphics[scale=.8]{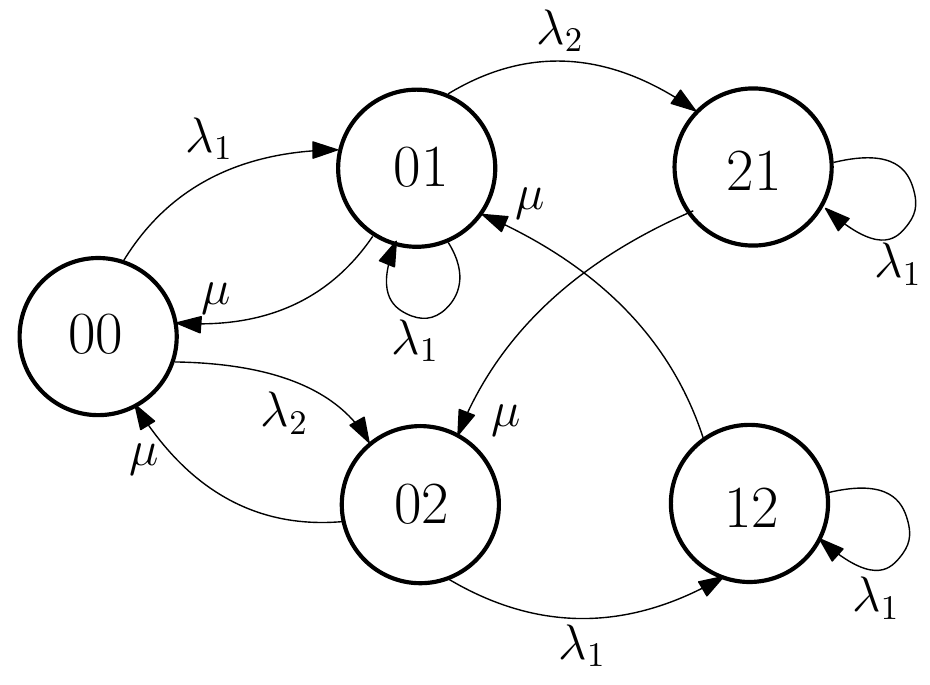}\vspace{-3mm}			
	\caption{The SHS Markov chain for the self-preemptive policy.}
	\vspace{-3mm}
	\label{Chain_MGF}
\end{figure}

\begin{table}
	\centering\small
	\caption{Table of transitions for  the self-preemptive policy}
	\label{Multi_MGF_Table_1}
	\begin{tabular}{ |l|l|c|c|c|c|c|}
		\hline
		\textit{l}  & $q_l \rightarrow q'_l $&$\lambda^{(l)}$& $\bold{x}\bold{A}_l$&$\bold{A}_l$&$\hat{\bold{A}}_l$ \\			
		\hline			
		1&$00 \rightarrow  01$& $\lambda_1$&$\left[x_0 ~ 0 ~ x_2 \right]$&$\tiny\begin{bmatrix}
		1 & 0 & 0\\
		0 & 0 & 0\\
		0 & 0 & 1
		\end{bmatrix}$&$\tiny\begin{bmatrix}
		0 & 0 & 0\\
		0 & 1 & 0\\
		0 & 0 & 0
		\end{bmatrix}$	\\		
		\hline
		2&$00 \rightarrow  02$&$\lambda_2$&$\left[x_0~ x_0~ x_2 \right]$&$\tiny\begin{bmatrix}
		1 & 1 & 0\\
		0 & 0 & 0\\
		0 & 0 & 1
		\end{bmatrix}$&$\tiny\begin{bmatrix}
		0 & 0 & 0\\
		0 & 0 & 0\\
		0 & 0 & 0
		\end{bmatrix}$\\
		\hline	
		3&$01 \rightarrow  01$&$\lambda_1$&$\left[x_0~ 0~ x_2 \right]$&$\tiny\begin{bmatrix}
		1 & 0 & 0\\
		0 & 0 & 0\\
		0 & 0 & 1
		\end{bmatrix}$&$\tiny\begin{bmatrix}
		0 & 0 & 0\\
		0 & 1 & 0\\
		0 & 0 & 0
		\end{bmatrix}$\\
		\hline	
		4&$01 \rightarrow  21$&$\lambda_2$&$\left[x_0~ x_1~ x_1 \right]$&$\tiny\begin{bmatrix}
		1 & 0 & 0\\
		0 & 1 & 1\\
		0 & 0 & 0
		\end{bmatrix}$&$\tiny\begin{bmatrix}
		0 & 0 & 0\\
		0 & 0 & 0\\
		0 & 0 & 0
		\end{bmatrix}$\\
		\hline	
		5&$02 \rightarrow 12 $&$\lambda_1$&$\left[x_0~ x_0~ 0 \right]$&$\tiny\begin{bmatrix}
		1 & 1 & 0\\
		0 & 0 & 0\\
		0 & 0 & 0
		\end{bmatrix}$&$\tiny\begin{bmatrix}
	    0 & 0 & 0\\
		0 & 0& 0\\
		0 & 0 & 1
		\end{bmatrix}$\\
		\hline	
		6&$21 \rightarrow  21$&$\lambda_1$&$\left[x_0~ 0~ 0 \right]$&$\tiny\begin{bmatrix}
		1 & 0 & 0\\
		0 & 0 & 0\\
		0 & 0 & 0
		\end{bmatrix}$&$\tiny\begin{bmatrix}
		0 & 0 & 0\\
		0 & 1 & 0\\
		0 & 0 & 1
		\end{bmatrix}$\\
		\hline	
		7&$12 \rightarrow 12 $&$\lambda_1$&$\left[x_0~ x_0~ 0 \right]$&$\tiny\begin{bmatrix}
		1 & 1 & 0\\
		0 & 0 & 0\\
		0 & 0 & 0
		\end{bmatrix}$&$\tiny\begin{bmatrix}
		0 & 0 & 0\\
		0 & 0 & 0\\
		0 & 0 & 1
		\end{bmatrix}$\\
		\hline	
		8&$01 \rightarrow  00$&$\mu$&$\left[x_1~ x_1~ x_2 \right]$&$\tiny\begin{bmatrix}
		0 & 0 & 0\\
		1 & 1 & 0\\
		0 & 0 & 1
		\end{bmatrix}$&$\tiny\begin{bmatrix}
		0 & 0 & 0\\
		0 & 0 & 0\\
		0 & 0 & 0
		\end{bmatrix}$\\
		\hline	
		9&$02 \rightarrow  00$&$\mu$&$\left[x_0~ x_1~ x_2 \right]$&$\tiny\begin{bmatrix}
		1 & 0 & 0\\
		0 & 1 & 0\\
		0 & 0 & 1
		\end{bmatrix}$&$\tiny\begin{bmatrix}
		0 & 0 & 0\\
		0 & 0 & 0\\
		0 & 0 & 0
		\end{bmatrix}$\\
		\hline	
		10&$21 \rightarrow  02$&$\mu$&$\left[x_1~ x_1~ x_2 \right]$&$\tiny\begin{bmatrix}
		0 & 0 & 0\\
		1 & 1 & 0\\
		0 & 0 & 1
		\end{bmatrix}$&$\tiny\begin{bmatrix}
		0 & 0 & 0\\
		0 & 0 & 0\\
		0 & 0 & 0
		\end{bmatrix}$\\
		\hline	
		11&$12 \rightarrow  01$&$\mu$&$\left[x_0~ x_2~ x_2 \right]$&$\tiny\begin{bmatrix}
		1 & 0 & 0\\
		0 & 0 & 0\\
		0 & 1 & 1
		\end{bmatrix}$&$\tiny\begin{bmatrix}
		0 & 0 & 0\\
		0 & 0 & 0\\
		0 & 0 & 0
		\end{bmatrix}$\\
		\hline			
	\end{tabular}
		\vspace{-3mm}
\end{table}

As it has been shown in \cite[Section~IV.B]{moltafet2020sourceawareage}, the stationary probabilities are given as 
\begin{align}\label{proeqq001}
\bar{\boldsymbol{\pi}}= \dfrac{1}{2\rho_1\rho_2+\rho+1}\left[1~~ \rho_1~~ \rho_2 ~~\rho_1\rho_2 ~~\rho_1\rho_2\right].
\end{align}
%

Recall from Section \ref{The SHS Technique to Calculate MGF} that to calculate the MGF of the AoI, first, we need to make sure whether  we can find  non-negative vectors ${\bar{\bold{v}}_q=[\bar{v}_{q0} \cdots \bar{v}_{qn} ], \forall q\in \mathcal{Q}}$,  satisfying \eqref{asleq}. 
As it is shown in \cite[Eq.~(13)]{moltafet2020sourceawareage}, the system of linear equations in \eqref{asleq} has a non-negative solution. Thus, the MGF of the AoI can be calculated by solving the system of linear equations in \eqref{aslieq}. We form \eqref{aslieq}
by  substituting  the values of $\bold{A}_l$ and   $\bold{\hat A}_l$ presented in Table \ref{Multi_MGF_Table_1} and  the vector $ \bar{\boldsymbol{\pi}}$ in   \eqref{proeqq001}. By solving the formed system of linear equations, the values of $\bar{v}^s_{q0}, \,\,\forall q\in\mathcal{Q}$, under the self-preemptive policy are calculated as presented in Appendix \ref{Valuesof  mgf vs 1 appendix}.

Finally,  substituting the values of $\bar{v}^s_{q0}, \,\,\forall q\in\mathcal{Q}$, 
into \eqref{MGFage}, we obtain the  MGF of the AoI of source 1 under  the self-preemptive policy, as given in Theorem \ref{MGFtheo1}.

\subsection{MGF of the AoI Under the Non-Preemptive Policy}\label{MGF of the AoI under Policy 2}
The Markov chain of the non-preemptive policy is the same as that for the self-preemptive policy. Thus, the stationary probability vector  $\bar{\boldsymbol{\pi}}$ of the non-preemptive policy is given in \eqref{proeqq001}.   The transitions between the discrete states ${{q_l \rightarrow q'_l}}$, and their effects on the  state $\bold{x}(t)$  for $l\in \{1,2,4,5,7,8,9,10,11\}$ are same as those fo the self-preemptive policy. The transitions $l\in \{3,6\}$ and their effects on the continuous state $\bold{x}(t)$ are summarized in Table~\ref{Multi_MGF_Table_2}. The explanations of the transitions can be found in \cite[Section~IV.C]{moltafet2020sourceawareage}.
 \begin{table}
	\centering\small
	\caption{Table of transitions for the non-preemptive policy}
	\label{Multi_MGF_Table_2}
	\begin{tabular}{ |l|l|c|c|c|c|}
		\hline
		\textit{l}  & $q_l \rightarrow q'_l $&$\lambda^{(l)}$& $\bold{x}\bold{A}_l$&$\bold{A}_l$&$\hat{\bold{A}}_l$  \\			
		\hline	
		3&$01 \rightarrow  01$&$\lambda_1$&$\left[x_0~ x_1~ x_2 \right]$&$\tiny\begin{bmatrix}
		1 & 0 & 0\\
		0 & 1 & 0\\
		0 & 0 & 1
		\end{bmatrix}$&$\tiny\begin{bmatrix}
		0 & 0 & 0\\
		0 & 0 & 0\\
		0 & 0 & 0
		\end{bmatrix}$\\
		\hline	
		6&$21 \rightarrow  21$&$\lambda_1$&$\left[x_0~ x_1~ x_1 \right]$&$\tiny\begin{bmatrix}
		1 & 0 & 0\\
		0 & 1 & 1\\
		0 & 0 & 0
		\end{bmatrix}$&$\tiny\begin{bmatrix}
		0 & 0 & 0\\
		0 & 0 & 0\\
		0 & 0 & 0
		\end{bmatrix}$\\
		\hline
	\end{tabular}
		\vspace{-5mm}
\end{table}

As it is shown in \cite[Eq.~(14)]{moltafet2020sourceawareage}, the system of linear equations in \eqref{asleq} has a non-negative solution. Thus, the MGF of the AoI can be calculated by solving the system of linear equations in \eqref{aslieq}. We form \eqref{aslieq}
by  substituting  the values of $\bold{A}_l$ and   $\bold{\hat A}_l$ presented in Tables \ref{Multi_MGF_Table_1} and \ref{Multi_MGF_Table_2} and  the vector $ \bar{\boldsymbol{\pi}}$ in   \eqref{proeqq001}.
%
%
 By solving the formed system of linear equations, the values of $\bar{v}^s_{q0}, \,\,\forall q\in\mathcal{Q}$, under the non-preemptive policy are calculated as presented in Appendix \ref{Valuesof  mgf vs 2 appendix}. 

Finally,  substituting the values of $\bar{v}^s_{q0}, \,\,\forall q\in\mathcal{Q}$, 
into \eqref{MGFage} results in the  MGF of the AoI of source 1 under the non-preemptive policy, given in Theorem \ref{MGFtheo2}.

The following remark presents how we can derive different moments of the AoI by using the MGF.
\normalfont \begin{rem}\label{rem1MGF}\normalfont
The $i$th moment of the  AoI of source 1, $\Delta_1^{(i)}$, is calculated as
\begin{align}
\Delta_1^{(i)}=\dfrac{\mathrm{d}^i(M_{\Delta_1}(s))}{\mathrm{d}s^i}\Big|_{s=0}.
\end{align}
\end{rem}
\section{First and Second Moments of the AoI}\label{Numerical Results}
Having derived the MGF of the AoI presented in Theorems~\ref{MGFtheo1} and \ref{MGFtheo2}, we apply Remark \ref{rem1MGF} to derive the first and second moments of the AoI of source 1.
{It is easy to show that the first moment of the AoI of source 1 under the self-preemptive and non-preemptive policies coincide with the results in \cite[Theorems~2 and 3]{moltafet2020sourceawareage}, as expected.}
%
%
The second moment of the AoI of source 1 under  the self-preemptive policy is given as 
\begin{align}\label{secondmoment1}
\Delta^{(2)}_1=\dfrac{2{(\rho_2+1)}^3+2\sum_{k=1}^8 \rho_1^k \xi_k}
{\mu \rho^2_1\left(1+\rho_1\right)^3(1+\rho)^2(2\rho_1\rho_2+\rho+1)},
\end{align}
where
$
\xi_1\!=\!7\rho_2^3\!+\!21\rho_2^2\!+\!21\rho_2\!+\!7,$ $
\xi_2\!=\!22\rho_2^3\!+\!68\rho_2^2\!+\!68\rho_2\!+\!22,$ $
\xi_3\!=\!40\rho_2^3\!+\!113\rho_2^2\!+\!134\rho_2\!+\!41$ $
\xi_4\!=\!36\rho_2^3\!+\!161\rho_2^2\!+\!180\rho_2\!+\!50,$ $
\xi_5\!=\!18\rho_2^3\!+\!113\rho_2^2\!+\!160\rho_2\!+\!41,$ $
\xi_6\!=\!4\rho_2^3\!+\!45\rho_2^2\!+\!88\rho_2\!+\!22,$ $
\xi_7=8\rho_2^2+28\rho_2+7,$ $
\xi_8=4\rho_2+1.
$
The second moment of the AoI of source 1 under the non-preemptive policy is 
\begin{align}\label{secondmoment2}
\Delta^{(2)}_1\!=\!\dfrac{2{(\rho_2+1)}^5+2\sum_{k=1}^7 \rho_1^k \hat\xi_k}
{\mu \rho^2_1\left(1\!+\!\rho_1\right)^2\left(1\!+\!\rho_2\right)^2(1\!+\!\rho)^2(2\rho_1\rho_2\!+\!\rho\!+\!1)},
\end{align}
where
$
\hat\xi_1=6\rho_2^5+30\rho_2^4+60\rho_2^3+60\rho_2^2+30\rho_2+6,$ $
\hat\xi_2=18\rho_2^5+91\rho_2^4+182\rho_2^3+180\rho_2^2+88\rho_2+17,$ $
\hat\xi_3\!=\!34\rho_2^5\!+\!178\rho_2^4\!+\!361\rho_2^3\!+355\rho_2^2\!+\!169\rho_2\!+\!31,$ $
\hat\xi_4\!=\!29\rho_2^5\!+\!190\rho_2^4\!+439\rho_2^3\!+463\rho_2^2\!+\!224\rho_2+39,$ $
\hat\xi_5=9\rho_2^5+97\rho_2^4+293\rho_2^3+365\rho_2^2+192\rho_2+32,$ $
\hat\xi_6=18\rho_2^4+92\rho_2^3+151\rho_2^2+93\rho_2+15,$ $
\hat\xi_7=9\rho_2^3+24\rho_2^2+19\rho_2+3.
$

Fig. \ref{SDV}  depicts the average AoI of source 1 and its standard deviation ($ \sigma $) as a function of $\lambda_1$ under the two packet management policies for $\mu=1$ and $\lambda=5$. This figure shows that in a status update system, the standard deviation of the AoI might have a large value. Thus, to have a reliable system, in addition to optimizing the average AoI, we need to take the  higher moments of the AoI into account.

\begin{figure}
	\centering
	\includegraphics[scale=.54]{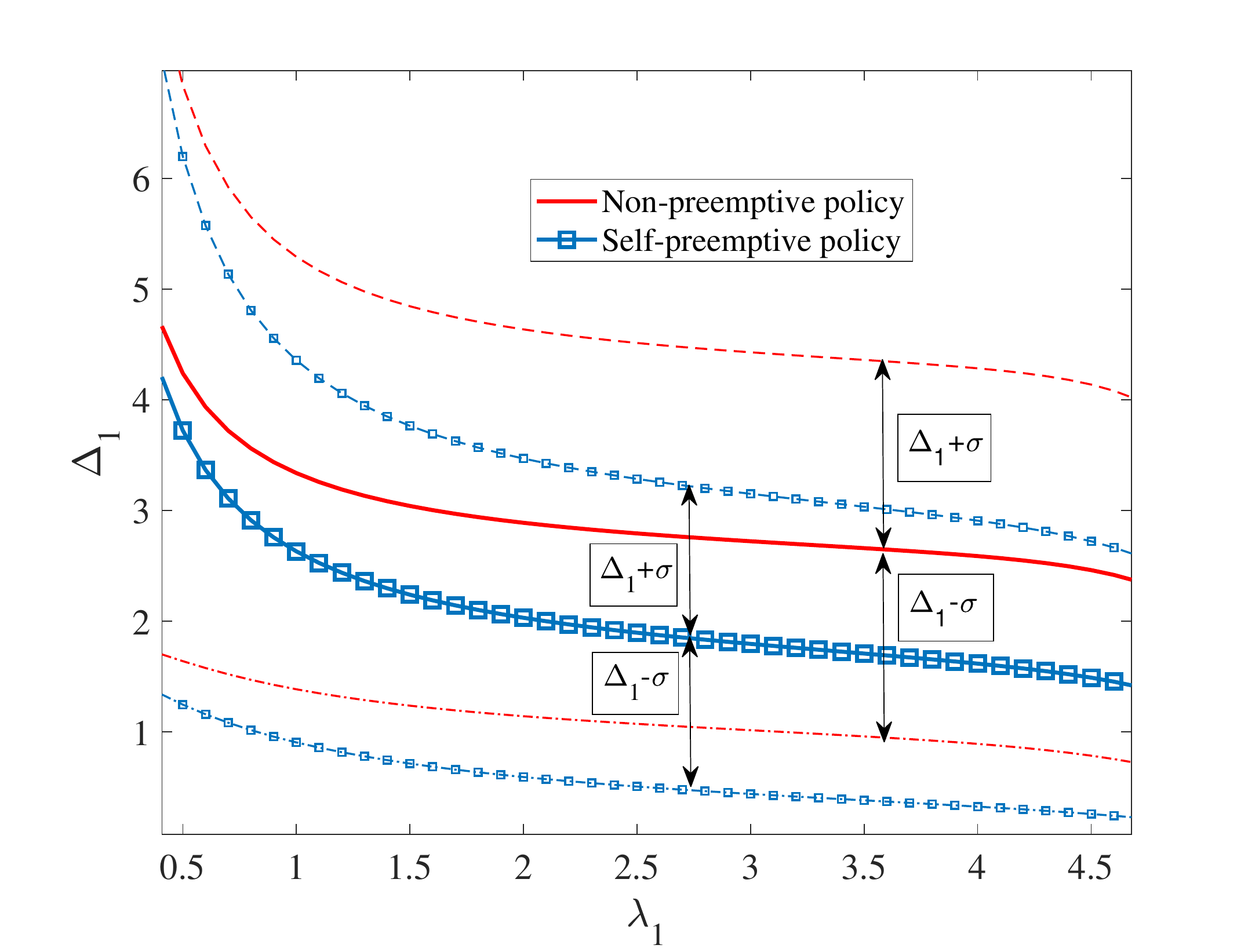}\vspace{-2mm}			
	\caption{The average AoI of source 1 and its standard deviation as a function of $\lambda_1$ under the two packet management policies for $\mu=1$ and $\lambda=5$. The dashed lines visualize the standard deviation of the AoI as $\Delta_1+\sigma$ and $\Delta_1-\sigma$.}
	\vspace{-8mm}
	\label{SDV}
\end{figure}

\section{Conclusion}\label{Conclusions}
We considered a status update system consisting of two independent sources and one server in which packets of each source are generated according to the Poisson process and packets in the system are served according to an exponentially distributed service time. We derived the MGF of the AoI under two packet management policies by using the SHS technique. We derived the first and second moments of the AoI by using the MGF and demonstrated the importance of considering higher moments in the AoI optimization. 

{The interesting future works would be to i) derive the MGF of the AoI for the packet management policy  where the packet of one source is  preempted by a new arrival of the same source whereas the arriving packets of the other source are discarded when there is a packet of that source in the system; and ii) study the average AoI minimization with a maximum variance constraint.}

\appendices
\section{Values of $\bar{v}^s_{q0}$ for  the Self-Preemptive Policy} \label{Valuesof  mgf vs 1 appendix}
\vspace{-5mm}
\begin{align}\nonumber
&\bar{v}^s_{00}=\dfrac{\rho_1}{2\rho_1\rho_2+\rho+1}\Big[\dfrac{(1-\bar{s})^2+\rho_2}{(\rho_1-\bar{s})(1+\rho_1-\bar{s})^2(1+\rho-\bar{s})^2}\\&\nonumber
+\dfrac{\rho_1^3+\rho_1^2(3-2\bar{s}+\rho_2)+\rho_1((2-\bar{s})^2+\rho_2(2-\bar{s})-1)}{(\rho_1-\bar{s})(1+\rho_1-\bar{s})^2(1+\rho-\bar{s})^2}\Big].
\end{align}
\begin{align}\nonumber
&\bar{v}^s_{10}=\dfrac{\rho^2_1}{2\rho_1\rho_2+\rho+1}
\Big[\dfrac{\rho_1^3(\rho_2+1-\bar{s})+\rho_1^2\alpha_{1,1}+\rho_1\alpha_{1,2}+(1-\bar{s})^3}
{(1-\bar{s})(\rho_1-\bar{s})(1+\rho_1-\bar{s})^2(1+\rho_2-\bar{s})(1+\rho-\bar{s})}+\\&\nonumber
\dfrac{(\rho_2+1)^2-3\rho_2\bar{s}-1}
{(1-\bar{s})(\rho_1-\bar{s})(1+\rho_1-\bar{s})^2(1+\rho_2-\bar{s})(1+\rho-\bar{s})}\Big],
\end{align}
where
$
\alpha_{1,1}=(\rho_2+2)^2+2(\bar{s}-2)^2+3(1-\rho_2)-9,$ and $
\alpha_{1,2}=\rho_2^2(2-\bar{s})+\rho_2(5-6\bar{s}+2\bar{s}^2)+3-7\bar{s}+5\bar{s}^2-\bar{s}^3.
$
\begin{align}\nonumber
&\bar{v}^s_{20}=\dfrac{\rho_1\rho_2}{2\rho_1\rho_2+\rho+1}\Big[\dfrac{1-2\bar{s}+\rho_2}
{(\rho_1-\bar{s})(1+\rho_1-\bar{s})^2(1+\rho-\bar{s})}+\\&\nonumber
\dfrac{\rho_1^3+\rho_1^2(3-2\bar{s}+\rho_2)+\rho_1(3(1-\bar{s})+\bar{s}^2+(2-\bar{s}))}
{(\rho_1-\bar{s})(1+\rho_1-\bar{s})^2(1+\rho-\bar{s})}\Big].
\end{align}
\begin{align}\nonumber
&\bar{v}^s_{30}=\dfrac{\rho^2_1\rho_2}{2\rho_1\rho_2+\rho+1}
\Big[\dfrac{\rho_1^3(\rho_2+1-\bar{s})+\rho_1^2\alpha_{3,1}+\rho_1\alpha_{3,2}+(\rho_2+1)^2}
{(\rho_1-\bar{s})(1-\bar{s})^2(1+\rho_1-\bar{s})^2(1+\rho_2-\bar{s})(1+\rho-\bar{s})}+\\&\nonumber\dfrac{(1-\bar{s})^3-3\rho_2\bar{s}-1}
{(\rho_1-\bar{s})(1-\bar{s})^2(1+\rho_1-\bar{s})^2(1+\rho_2-\bar{s})(1+\rho-\bar{s})}\Big],
\end{align}
where
$
\alpha_{3,1}=(\rho_2+2)^2+2(\bar{s}-1)^2-\bar{s}(3\rho_2+1)-3,$ and $
\alpha_{3,2}=\rho_2^2(2-\bar{s})+\rho_2(5-6\bar{s}+2\bar{s}^2)+3-7\bar{s}+5\bar{s}^2-\bar{s}^3.
$
\begin{align}\nonumber
&\bar{v}^s_{40}=\dfrac{\rho^2_1\rho_2}{2\rho_1\rho_2+\rho+1}
\Big[\dfrac{\rho_1^3+\rho_1^2(3-2\bar{s}+\rho_2)+\rho_1(3(1-\bar{s})}
{(\rho_1\!-\!\bar{s})(1\!-\!\bar{s})(1\!+\!\rho_1\!-\!\bar{s})^2(1\!+\!\rho-\bar{s})}\\&\nonumber+\dfrac{\rho_2(2-\bar{s})+\bar{s}^2+1)+1-2\bar{s}+\rho_2}
{(\rho_1-\bar{s})(1-\bar{s})(1+\rho_1-\bar{s})^2(1+\rho-\bar{s})}\Big].
\end{align}
\section{Values of $\bar{v}^s_{q0}$ for the Non-Preemptive Policy} \label{Valuesof  mgf vs 2 appendix}
\vspace{-5mm}
\begin{align}\nonumber
&\bar{v}^s_{00}=\dfrac{\rho_1}{2\rho_1\rho_2+\rho+1}
\Big[\dfrac{\rho_1^2(1-\bar{s})(1+\rho_2)+\rho_1\bar\alpha_{0,1}+(1+\rho_2)^2}
{(\rho_1-\bar{s})(1-\bar{s})(1+\rho_1-\bar{s})(1+\rho_2-\bar{s})(1+\rho-\bar{s})}+\\&\nonumber
\dfrac{\rho_2\bar{s}(\bar{s}-3)+(1-\bar{s})^3-1}
{(\rho_1-\bar{s})(1-\bar{s})(1+\rho_1-\bar{s})(1+\rho_2-\bar{s})(1+\rho-\bar{s})}\Big],
\end{align}
where 
$
\bar\alpha_{0,1}=\rho_2(\rho_2+3)+\rho_2\bar{s}(\bar{s}-3)+2(1-\bar{s}).
$
\begin{align}\nonumber
&\bar{v}^s_{10}=\dfrac{\rho^2_1}{2\rho_1\rho_2+\rho+1}
\Big[\dfrac{\rho_1^2\bar\alpha_{1,1}+\rho_1\bar\alpha_{1,2}+\rho_2^3+\rho_2^2(3-4\bar{s})}
{(\rho_1-\bar{s})(1-\bar{s})^2(1+\rho_1-\bar{s})(1+\rho_2-\bar{s})^2(1+\rho-\bar{s})}+\\&\nonumber
\dfrac{\rho_2(3-8\bar{s}+6\bar{s}^2-\bar{s}^3)+(1-\bar{s})^4}
{(\rho_1-\bar{s})(1-\bar{s})^2(1+\rho_1-\bar{s})(1+\rho_2-\bar{s})^2(1+\rho-\bar{s})}\Big],
\end{align}
where 
$\bar\alpha_{1,1}=(\rho_2+1)^2+\rho_2\bar{s}(\bar{s}-2)+(1-\bar{s})^2-1,$ and $
\bar\alpha_{1,2}=\rho_2^3+\rho_2^2(4-3\bar{s})+\rho_2(5-9\bar{s}+4\bar{s}^2-\bar{s}^3)+2(1-\bar{s})^3.
$
\begin{align}\nonumber
&\bar{v}^s_{20}=\dfrac{\rho_1\rho_2}{2\rho_1\rho_2+\rho+1}
\Big[\dfrac{\rho_1^2(\rho_2+1)+\rho_1(\rho_2^2+3\rho_2+2-\bar{s}(2\rho_2+3))+\rho_2^2}
{(\rho_1-\bar{s})(1-\bar{s})(1+\rho_1-\bar{s})(1+\rho_2-\bar{s})(1+\rho-\bar{s})}+\\&\nonumber\dfrac{\rho_2(2-3\bar{s})+1-3\bar{s}+2\bar{s}^2}
{(\rho_1-\bar{s})(1-\bar{s})(1+\rho_1-\bar{s})(1+\rho_2-\bar{s})(1+\rho-\bar{s})}\Big],
\end{align}
\vspace{-2mm}
\begin{align}\nonumber
&\bar{v}^s_{30}=\dfrac{\rho^2_1\rho_2}{2\rho_1\rho_2+\rho+1}
\Big[\dfrac{\rho_1^2\bar\alpha_{3,1}+\rho_1\bar\alpha_{3,2}+\rho_2^3+\rho_2^2(3-4\bar{s})}
{(\rho_1-\bar{s})(1-\bar{s})^3(1+\rho_1-\bar{s})(1+\rho_2-\bar{s})^2(1+\rho-\bar{s})}+\\&\nonumber\dfrac{\rho_2(3-8\bar{s}+6\bar{s}^2-\bar{s}^3)+(1-\bar{s})^2}
{(\rho_1-\bar{s})(1-\bar{s})^3(1+\rho_1-\bar{s})(1+\rho_2-\bar{s})^2(1+\rho-\bar{s})}\Big],
\end{align}
where 
$
\bar\alpha_{3,1}=(\rho_2+1)^2+\rho_2\bar{s}(\bar{s}-2)+(1-\bar{s})^2-1,$ and $
\bar\alpha_{3,2}=\rho_2^3+\rho^2_2(4-3\bar{s})+\rho_2(5-9\bar{s}+4\bar{s}^2-\bar{s}^3)+2(1-\bar{s})^3.
$
\begin{align}\nonumber
&\bar{v}^s_{40}=\dfrac{\rho^2_1\rho_2}{2\rho_1\rho_2+\rho+1}
\Big[\dfrac{\rho_1^2(\rho_2+1)+\rho_1(\rho_2^2+3\rho_2+2-\bar{s}(2\rho_2+3))}
{(\rho_1-\bar{s})(1-\bar{s})^2(1+\rho_1-\bar{s})(1+\rho_2-\bar{s})(1+\rho-\bar{s})}+\\&\nonumber\dfrac{(\rho_2+1)^2+3\rho_2-3\bar{s}+2\bar{s}^2}
{(\rho_1-\bar{s})(1-\bar{s})^2(1+\rho_1-\bar{s})(1+\rho_2-\bar{s})(1+\rho-\bar{s})}\Big].
\end{align}

\bibliographystyle{IEEEtran}
\bibliography{Bibliography}
\end{document}